\documentclass[10pt,conference]{IEEEtran} 
\IEEEoverridecommandlockouts
\usepackage[linesnumbered,vlined,ruled]{algorithm2e}
\usepackage{algorithmic}
\usepackage{etoolbox}
\usepackage{cite}
\usepackage{graphicx}
\graphicspath{ {./images/} }
\usepackage{amsmath,amsthm,amssymb,amsfonts}
\usepackage{mathtools}
\usepackage[dvipsnames]{xcolor}
\usepackage{dcolumn}
\usepackage[utf8]{inputenc}
\usepackage{soul}
\usepackage{array}
\usepackage{tabulary}
\usepackage{enumitem}

\newtheorem{definition}{Definition}   %
\theoremstyle{definition}             
\newtheorem{policy}{Policy}           
\theoremstyle{remark}                 
\theoremstyle{plain}                  
\newtheorem{lemma}{Lemma}             

\newcommand{\comments}[1]{{\leavevmode\color{black}#1}}
\newcommand{\tann}[1]{\textcolor{black}{#1}}

\newcommand{\define}{\triangleq}

\newcommand{\vecG}{\boldsymbol}
\renewcommand{\vec}{\mathbf}
\DeclarePairedDelimiter{\set}{\{}{\}}

\DeclarePairedDelimiter{\Inorm}{\|}{\|_1}
\DeclarePairedDelimiter{\Paren}{\bigg(}{\bigg)}
\DeclarePairedDelimiter{\Bracket}{\bigg[}{\bigg]}
\DeclarePairedDelimiter{\Brace}{\bigg\{}{\bigg\}}

\newcommand{\deny}[1]{}
\newcommand{\delete}[2]{}

\newcommand{\AP}{\dagger}
\newcommand{\ES}{\ddagger}
\newcommand{\apSet}{\mathcal{K}}
\newcommand{\esSet}{\mathcal{M}}
\newcommand{\ccSet}{\mathcal{X}}
\newcommand{\jSpace}{\mathcal{J}}
\newcommand{\Stat}{\mathbf{S}}

\newcommand{\Policy}{\vecG{\Omega}}
\newcommand{\Delay}{\vecG{\mathcal{D}}}
\newcommand{\Baseline}{\vecG{\Pi}}

\newcommand{\brlatency}{signaling latency}


\begin{document}
    \title{
        Online Distributed Job Dispatching with Outdated and Partially-Observable Information
    }

    \author{
        \IEEEauthorblockN{
            Yuncong Hong\IEEEauthorrefmark{2}\IEEEauthorrefmark{3}\IEEEauthorrefmark{4},
            Bojie Lv\IEEEauthorrefmark{2}\IEEEauthorrefmark{4},
            Rui Wang\IEEEauthorrefmark{2}\IEEEauthorrefmark{4},
            Haisheng Tan\IEEEauthorrefmark{1}\IEEEauthorrefmark{4},
            Zhenhua Han\IEEEauthorrefmark{3},
            Hao Zhou\IEEEauthorrefmark{1},
            Francis C.M. Lau\IEEEauthorrefmark{3}
        }
        \IEEEauthorblockA{\IEEEauthorrefmark{1}
            LINKE Lab, University of Science and Technology of China, Hefei, China \\
        }
        \IEEEauthorblockA{\IEEEauthorrefmark{2}
            Department of Electrical and Electronic Engineering, Southern University of Science and Technology, Shenzhen, China \\
        }
        \IEEEauthorblockA{\IEEEauthorrefmark{3}
            Department of Computer Science, The University of Hong Kong, Hong Kong, China \\
        }
        \IEEEauthorblockA{\IEEEauthorrefmark{4}
            Research Center of Networks and Communications, Peng Cheng Laboratory, Shenzhen, China \\
        }
    }%


    \maketitle
    
    \begin{abstract}
    In this paper, we investigate  online distributed job dispatching  in an edge computing system residing in a Metropolitan Area Network (MAN).
    Specifically, job dispatchers are implemented on access points (APs) which collect jobs from mobile users and distribute each job to a server at the edge or the cloud. A signaling mechanism with periodic broadcast is introduced to facilitate cooperation among APs.
    The transmission latency is non-negligible in MAN, which leads to outdated information sharing among APs. 
    Moreover, the fully-observed system state is discouraged as reception of all broadcast is time consuming. Therefore, we formulate the distributed optimization of job dispatching strategies among the APs as a Markov decision process with partial and outdated system state, \emph{i.e.}, partially observable Markov Decision Process (POMDP). The conventional solution for POMDP is impractical due to huge time complexity.  We propose a novel low-complexity solution framework for distributed job dispatching, based on which the optimization of job dispatching policy can be decoupled via an alternative policy iteration algorithm, so that the distributed policy iteration of each AP can be made according to partial and outdated observation. A theoretical performance lower bound is proved for our approximate MDP solution. Furthermore, we conduct extensive simulations based on the Google Cluster {trace}. {The evaluation results show that our policy can achieve {as high as} $20.67\%$ reduction in average job response time compared with heuristic baselines, and our algorithm consistently performs well under various parameter settings.}
\end{abstract}


    \section{Introduction}
\label{sec:introduction}
Edge computing is a promising solution for increasing computation-intensive and energy-hungry applications on mobile devices.
Large amount of mobile devices can connect to the access points (APs) which function as gateways to aggregate and dispatch jobs to the edge servers \cite{MEC-SURVEY}.
The edge servers are deployed in closer proximity to APs than cloud infrastructures, which alleviate the communication overhead and enable the computation of time-sensitive jobs.
However, the edge servers are usually deployed with limited computation resources.
The establishment of efficient cooperation among edge servers is one of the major design challenges, given the signaling overhead and latency of distributed information exchange and decision making.

\begin{figure}[htp!]
    \centering
    \includegraphics[width=0.42\textwidth]{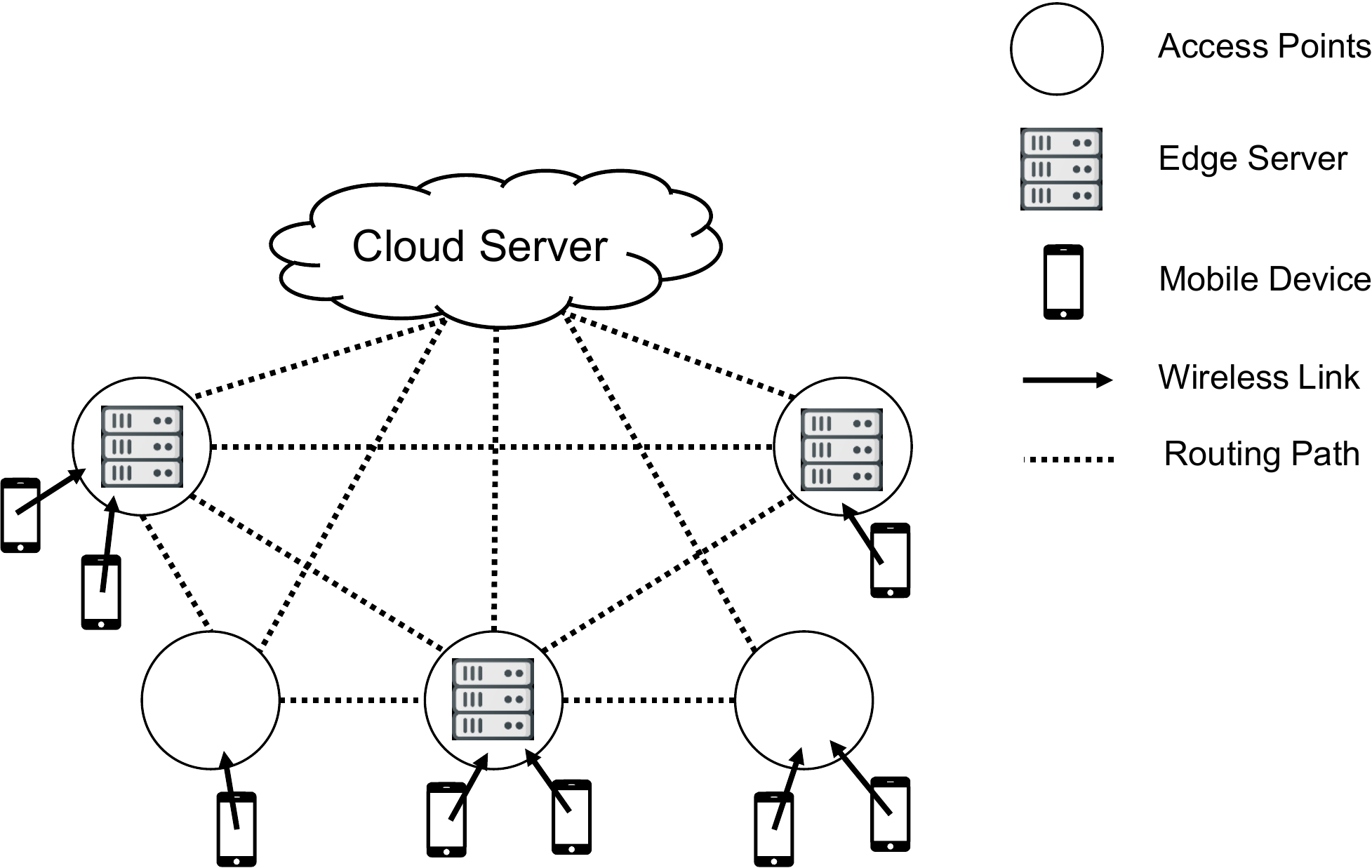}
    \caption{The Illustration of System Model}
    \label{fig:system}
\end{figure}

We consider an edge computing system with multiple APs and edge servers residing in the Metropolitan Area Network (MAN) as illustrated in Fig.\ref{fig:system}.
The APs collect jobs offloaded from the mobile users in its service area and make dispatching decision for each job. According to the MAN performance analysis in \cite{MAN-LATENCY}, the data transmission latency {among APs and edge servers} varies a lot with respect to different hours of day and devices' locations in a MAN. {In addition, each AP also suffers from signaling latency, which is the time consumed for each AP to collect system state information under some signaling mechanism.}
There has been a number of existing literature considering random transmission latency of job delivery in edge computing networks (e.g., \cite{latency-EDGE19,MOBIHOC19-ZhouZ,IOTJ18-FanQ,TOC19-LiuC,JSAC19-AlameddineHA}). 
\tann{However, there are only a few works considering signaling latency of cooperation among distributed job dispatchers \cite{JSAC17-LyuX,TWC18-LyuX}.}
In fact, it is full of challenge to consider latency in {both job delivery and} signaling. The cooperation of distributed dispatchers suffers from significant unpredictable signaling and transmission overhead. Therefore, aggregating the global system state information at each AP is not practical, and a centralized dispatcher design is discouraged. Moreover, the latency will also lead to outdated information at each dispatcher and information inconsistency among different dispatchers\comments{, which may introduce ineliminable estimation error on the number of jobs in the system and thus discourages the cooperative dispatcher design}.

In this paper, we would like to shed some lights on the above challenging distributed dispatcher design via proposing the POMDP (partially observable Markov decision process) problem formulation and a novel low-complexity approximate MDP solution framework.
Specifically, we consider a practical scenario where the signaling latency among APs and edge servers as well as job uploading latency from APs to edge servers are assumed to be random, and each AP can only receive the broadcast system state information from part of the APs with random latency (i.e., the global system state is not available at the APs).
Our main contributions in this new optimization scenario are summarized as follows.
\begin{itemize}
    \item 
    We propose a novel low-complexity distributed solution framework for the dispatcher design at APs.
    By leveraging partial observation at each dispatcher, we derive the expression of approximate value function under MDP framework and decouple the optimization problem onto each AP.
    Thus, the complicated POMDP solution is avoided.
    To our best knowledge, this is the first work to address the cooperative distributed multi-agent optimization problem with outdated and partial information under MDP framework.
    \item 
    We derive an analytical cost lower bound for the proposed distributed dispatching policy in the above low-complexity solution framework.
    In the conventional approximate MDP methods, the solution is usually evaluated via numerical method which are hard to obtain analytical performance bound.
    \item We conduct extensive simulations based on the Google Cluster trace, compared with three heuristic benchmarks. The evaluation results show that our proposed job dispatching policy can achieve $20.67\%$ reduction in average job response time, and our algorithm consistently performs well under various parameter settings of signaling latency, job arrival intensity and job processing time.
\end{itemize}

\noindent \textbf{Paper Organization: }The remainder of this paper is organized as follows.
In Section \ref{sec:review}, the related works are elaborated.
In Section \ref{sec:model}, we illustrate the system model and the signaling model with random latency.
In Section \ref{sec:formulation}, we formulate the global optimization of dispatching decisions at all APs as an POMDP.
In Section \ref{sec:algorithm}, we introduce the novel low-complexity distributed solution framework for the above POMDP.
The numerical analysis of the proposed solution is provided in Section \ref{sec:evaluation}, and the conclusion is drawn in Section \ref{sec:conclusion}.

\section{Related Work}
\label{sec:review}

There have been a number of works considering the centralized job dispatching with updated and complete knowledge on the system states of edge computing systems.
For example, in order to minimize the average job response time in the worst case, the authors in \cite{tan-online} designed an online algorithm for job dispatching in edge computing systems with fixed uploading latency.
In the scenario that APs and edge servers are connected via software defined network (SDN), the authors in \cite{IOTJ18-FanQ} proposed a heuristic algorithm to dispatch the jobs to the closest edge servers according to their locations.
Considering random jobs arrival and job offloading to a single edge server, the authors in \cite{mdp-globecom,mdp-tvt} formulate the offloading problem as an infinite-horizon Markov decision process (MDP).
In the above works, a centralized dispatcher with complete and updated knowledge of the system states was assumed in the edge computing systems, which might be impractical.

Hence, there are also some works considering the distributed job dispatching in edge computing systems.
For example, in order to minimize a weighted sum of total energy consumption and uploading latency, the authors in \cite{ToN-Xuchen2016} proposed a distributed job dispatching algorithm based on game theory to achieve the Nash equilibrium. 
Considering job migration at edge servers, the authors in \cite{ToN-xujie2018} optimized the edge computing performance in a distributed manner with limited energy resources via a congestion game framework.
In the scenario that APs cooperatively dispatch jobs with multiple edge servers, the authors in \cite{mdp-jcin} proposed a novel approximate MDP solution framework to alleviate the algorithm complexity and minimize the average job response time.
However, in the above works, the latency of information exchange among APs and edge servers is ignored.
In fact, due to the complicated network traffic, this latency might be significant, and the staleness of system state information at the dispatcher of a edge computing systems should be considered.

The staleness of information sharing among APs and edge servers may degrade the performance of the job dispatching algorithm in edge computing systems.
To the best of our knowledge, there are very limited works investigating this issue.
For example, the authors in \cite{JSAC17-LyuX} proposed a randomized policy via Lyapunov optimization approach to stabilize the queues in a MEC system with multiple IoT devices offloading jobs to one edge server, where \brlatency~is considered. 
In \cite{TWC18-LyuX}, the above approach was applied to the scenario that mobile devices offload jobs to each other via D2D link.
In the above two works, there is one centralized dispatcher in the system and the objective is to stabilize the transmission queues.
Hence, the existence of \brlatency~may not raise significant challenge to the algorithm designs.

However, the design of distributed dispatchers with \brlatency~could be more challenging.
For example, the signaling latency at distributed dispatchers could be different, and the synchronization of their dispatching decisions become infeasible.
Furthermore, taking the signaling overhead in consideration, it is of more practical significance favor for the distributed dispatchers to make decisions based on locally observed system state information, instead of global system state information.
To our best knowledge, there is no appropriate optimization framework for the distributed dispatcher design with both \brlatency~and partially observable system state information to date.



    \section{System Model}
\label{sec:model}

\subsection{Network Model}
We consider an edge computing system with $K$ Access Points (APs), $M$ edge servers, and $1$ cloud server as illustrated in Fig.\ref{fig:system}.
The sets of APs and processing servers are denoted as $\apSet \define \set{1,\dots,K}$ and $\esSet \define \set{0,\dots,M}$, respectively.
Specifically, we have \comments{\emph{processing server} denote both edge servers and the cloud server for job processing} and the $0$-th processing server denote the cloud server.
\comments{One edge server is always collocated with one AP, i.e., the edge server is deployed at the same place with the access point.}
Each AP collects the jobs from the mobile users within its coverage, and makes dispatching action on the processing servers for each job.
Furthermore, it is assumed that one AP could only access to the cloud server, the collocated edge server if existed, and neighbor edge servers \comments{(i.e., the pair of edge servers connected via the routing path as illustrated in Fig.\ref{fig:system})} due to transmission latency limitation.
We refer to $\esSet_{k} \subseteq \esSet$ as the \emph{candidate server set} of the $k$-th AP,  $\apSet_{m} \subseteq \apSet$ as the \emph{potential AP set} of the $m$-th edge server, and $\rho_{k,m}$ as the collocation indicator (i.e., $\rho_{k,m}=1$ if $k$-th AP and $m$-th edge server collocated, otherwise $\rho_{k,m}=0$) ($\forall k\in\apSet, m\in\esSet$).
We consider cloud server as one special processing server with stronger computation capability, but also with larger transmission latency compared with the edge servers.

The dispatchers are implemented on the APs in a distributed manner.
Without loss of generality, it is assumed that there are $J$ types of jobs computed in this system, which are denoted via the set $\jSpace \define \set{1,\dots,J}$.
The time axis is organized by time slots.
The arrivals of the type-$j$ jobs at the $k$-th AP in different time slots are assumed to be independent and identically distributed Bernoulli random variables, and the arrival probability are denoted as $\lambda_{k,j}$.
Each AP immediately dispatches each type of arrived jobs to one processing server.
\comments{It's assumed that the network traffic over the routing path is random, and the job uploading from one AP to one edge server consumes a random number of time slots.}
Let $\mathbb{U}_{k,m,j}(\Xi)$ be the uploading latency distribution of the type-$j$ jobs from the $k$-th AP to the $m$-th server with finite support $\set{1, \dots, \Xi}$, whose expectation is denoted as $u_{k,m,j}$.
Specifically, the uploading latency is fixed as $0$ if $\rho_{k,m}=1$ for job uploading to the collocated edge server ($\forall k\in\apSet,m\in\esSet,j\in\jSpace$).

We adopt the \emph{unrelated machines assumption} as in \cite{tan-online} for job computation process, where the computation time on different processing servers would follow independent distribution.
Specifically, there are $J$ parallel virtual machines (VMs) running on each processing server for the $J$ job types, respectively.
It is assumed that the computation time of different job types on different edge servers follows independent memoryless geometric distribution 
with different expectations as in \cite{TOWC18-HuangKb}.
Let $\mathbb{G}(1/c_{m,j})$ be the distribution of the computation time slots for the type-$j$ jobs on the $m$-th processing server, where $\mathbb{G}$ denotes the geometric distribution, $c_{m,j}$ is the expectation, and {$1/c_{m,j}$ represents the parameter of geometric distribution}.
For each job type, the uploaded jobs are computed in a First-Come-First-Serve (FCFS) manner, and a processing queue with a maximum job number $L_{max}$ is established for each VM.
The arrival jobs will be discarded on processing server when the processing queue is full.
\subsection{{Signaling Mechanism with Periodic Broadcast}}
\label{subsec:broadcast}
In order to facilitate distributed dispatching for the APs, the signaling mechanism with periodic broadcast is introduced.
We refer to every $t_B$ time slots as a broadcast interval.
At the beginning of each broadcast interval, the local state information (LSI) of APs and processing servers are broadcast, and each AP updates its dispatching strategy of job dispatching when observing the broadcast LSIs from some APs and processing servers.
\comments{The contents of the LSI at the APs and processing servers are given in the following definitions.}%

\begin{definition}[LSI of APs]
    Let $R^{(k)}_{m,j}(\xi,t,n) \in \set{0,1}$ be the indicator of the type-$j$ jobs at the $n$-th time slot of the $t$-th interval.
    Its value is $1$ when there is one job being uploaded from the $k$-th AP to the $m$-th server which has been delivered for $\xi$ time slots, and $0$ otherwise ($\forall k\in\apSet, m\in\esSet, j\in\jSpace$).
    Let $\omega_{k,j}(t)$ be the target processing server for the type-$j$ jobs of the $k$-th AP dispatched at the beginning of the $t$-th broadcast interval.
    And the LSI of the $k$-th AP at the beginning of the $t$-th broadcast interval is defined as
    {\small
    \begin{align}
        \mathcal{R}_{k}(t) \define
        \Paren{
            \Brace{\vec{R}^{(k)}_{m,j}(t,0) \Big| \forall m\in\esSet,j\in\jSpace},
            \mathcal{A}_{k}(t)
        },
    \end{align}
    }%
    where
    {\small
    \begin{align}
        \vec{R}^{(k)}_{m,j}(t,0) &\define \Paren{
            R^{(k)}_{m,j}(0,t,0), \dots, R^{(k)}_{m,j}(\Xi,t,0)
        },
        \\
        \mathcal{A}_{k}(t) &\define \Brace{\omega_{k,j}(t) \Big| \forall j\in\jSpace}
    \end{align}
    }%
    are referred as status of the type-$j$ job from the $k$-th AP to the $m$-th processing server, and dispatching actions of the $k$-th AP at the beginning of the $t$-th broadcast interval, respectively. 
\end{definition}

\begin{definition}[LSI of Processing Servers]
    Let $Q_{m,j}({t,n})$ be the number of type-$j$ jobs on the $m$-th processing server at the $n$-th time slot of the $t$-th interval ($\forall m\in\esSet, j\in\jSpace$).
    The LSI of the $m$-th processing server at beginning of the $t$-th broadcast interval is defined as
    {\small
    \begin{align}
        \mathcal{Q}_{m}(t) \define \Brace{
            Q_{m,j}(t, 0) \Big| \forall j\in\jSpace
        }.
    \end{align}
    }%
\end{definition}

\comments{Moreover, we refer to \emph{global state information} (GSI) as the aggregation of LSIs of all the APs and processing servers in one broadcast interval.}%
\begin{definition}[Global State Information]
    At the $t$-th broadcast interval, global state information (GSI) is defined as follows.
    {\small
    \begin{align}
        \Stat(t) \define
            \Paren{
                \Brace{\mathcal{R}_{k}(t) \Big| \forall k\in\apSet},
                \Brace{\mathcal{Q}_{m}(t) \Big| \forall m\in\esSet}
            }.
    \end{align}
    }%
\end{definition}

\comments{As a remark notice that the observable LSI may be outdated due to signaling latency among APs and processing servers.}
As the APs and processing servers may reside in different locations of a MAN, the transmission latency of LSI is not negligible. 
It might be inefficient for one AP (say the $k$-th AP) to collect the complete GSI before updating the dispatching policy.
For example, the transmission latency of the LSI from the edge servers out of its \emph{candidate server set} $\esSet_{k}$ may be large, and some broadcast information may be discarded by the routers after a certain number of hops.
\comments{
    Here, we firstly define \emph{conflict AP set} and notice that only the LSIs of a subset of APs are interested for one single dispatcher, and then we define the interested partially-observable information as \emph{observable state information} (OSI) based on the \emph{conflict AP set} and \emph{candidate server set}.
}%
The definitions of \emph{conflict AP set} and OSI are given below, respectively.
\begin{definition}[Conflict AP Set]
    The conflict AP set to the $k$-th AP consists of the neighboring APs who have direct impacts on the queueing time of the jobs on processing servers dispatched from the $k$-th AP, i.e., $ \ccSet_{k} \define \bigcup_{m\in\esSet_{k}} \apSet_{m}$.
\end{definition}

\begin{definition}[Observable State Information]
    The observable state information (OSI) of the $k$-th AP ($\forall k\in\apSet$) at the $t$-th broadcast interval is defined as the aggregation of LSIs of the APs in {conflict AP set} and the edge servers in {candidate server set} of the $k$-th AP, i.e.,
    {\small
    \begin{align}
        \Stat_{k}(t) &\define
        \Paren{
            \Brace{\mathcal{R}_{k'}(t) \Big| \forall k'\in\ccSet_{k}},
            \Brace{\mathcal{Q}_{m}(t) \Big| \forall m\in\esSet_{k}}
        }.
    \end{align}
    }%
    \label{def:OSI}
\end{definition}


The $k$-th AP is able to collect its OSI $\Stat_{k}(t)$ at the $\mathcal{D}_{k}(t)$-th time slots of the $t$-th broadcast interval, where $\mathcal{D}_{k}(t)$ \comments{denotes the \brlatency~of the $k$-th AP at the $t$-th broadcast interval which is a random variable in the unit of timeslot}.
It is assumed that $\mathcal{D}_{k}(t)$ follows identical and independent distribution in different broadcast interval.


    \section{POMDP-based Problem Formulation}
\label{sec:formulation}
In this section, we formulate the optimization of job dispatching at all APs as a Markov decision process (MDP) problem.
Since each AP updates the job dispatching actions according to OSI instead of GSI, the MDP problem is a partially observable MDP (POMDP).
\comments{
    Firstly, we give the definitions of \emph{dispatching policy} and \emph{cost function}, together with the \emph{system state} (i.e., the GSI) defined previously, to complete the MDP problem formulation.
}%

\begin{definition}[Dispatching Policy]
    The individual dispatching policy of the $k$-th AP, denoted as $\Omega_{k}$ ($\forall k \in\apSet$), maps from its OSI $\Stat_{k}$ and its \brlatency~$\mathcal{D}_{k}$ to the dispatching action for each job type, i.e.,
    {\small
    \begin{align}
        \Omega_{k} \Paren{ \Stat_{k}(t), \mathcal{D}_{k}(t) }
        &\define \mathcal{A}_{k}(t+1)
        \label{def:action}
    \end{align}
    }%

    The aggregation of individual policy of all APs is referred to as the system dispatching policy $\Policy$.
    Thus,
    {\small
    \begin{align}
        \Policy\Paren{ \Stat(t), \Delay(t) } \define \Brace{
            \Omega_{1}(\Stat_{1}(t), \mathcal{D}_{1}(t)), \dots, \Omega_{K}(\Stat_{K}(t),\mathcal{D}_{K}(t))
        },
    \end{align}
    }%
    where $\Delay(t) \define \set{ \mathcal{D}_{1}(t), \dots, \mathcal{D}_{K}(t) }$.
\end{definition}

According to the Little's law \cite{Little1961}, the average response time per job, counting the number of broadcast intervals from job arrival to the accomplishment of computation, is proportional to the number of jobs in the system, given the job arrival rates at all the APs.
Hence,
we define the cost function per broadcast interval\comments{, which consists the cumulative cost to be minimized in the MDP problem,} as follows.

\begin{definition}[Cost Function per Broadcast Interval]
    The cost function of the $t$-th broadcast interval with $\Stat(t)$ is defined as
    {\small
    \begin{align}
        g\Paren{\Stat(t)} \define
            \sum_{m\in\esSet,j\in\jSpace}
            \Brace{&
                \sum_{k\in\apSet} \Inorm{\vec{R}^{(k)}_{m,j}(t,0)} + Q_{m,j}(t,0)
                \nonumber\\
                &~~~~~+ \beta \cdot I[Q_{m,j}(t,0)=L_{max}]
            },
    \end{align}
    }%
    where $\Inorm{\vec{x}}$ denotes the $L^1$-norm of the vector $\vec{x}$, $\beta$ is the weight of overflow penalty\comments{, and $I[\cdot]$ denotes the indicator function which is equal to $1$ when the inner statement is true and $0$ otherwise}.
\end{definition}

\comments{Then, }since the job dispatching in one broadcast interval will affect the cost of the successive broadcast intervals, we should consider the joint minimization of the cumulative costs of all the broadcast intervals.
Specifically, we consider the following discounted sum of the costs of all the broadcast intervals as the system objective.
{\small
\begin{align}
    &\bar{G}(\Stat, \Policy) \define
    \lim_{T \to \infty} \mathbb{E}^{\Policy}_{\set{\Stat(t)|\forall t}}
    \Bracket{
        \sum_{t=1}^{T} \gamma^{t-1} g\Paren{\Stat(t)} \Big| \Stat(1)
    },
\end{align}
}%
where $\mathbb{E}^{\Policy}_{\set{\Stat(t)|\forall t}}[\cdot]$ denotes the expectation with respect to all possible system states in the future given scheduling policy $\Policy$, and $\gamma \in (0,1)$ is the discount factor.
Hence, the optimization of job dispatching policy can be formulated as the following minimization problem.
{\small
\begin{align}
    \textbf{P1:}~
    \min_{\Policy} \bar{G}(\Stat, \Policy).
\end{align}
}%

\comments{Finally, }if the GSI $\Stat(t)$ and \brlatency~$\Delay(t)$ are known to all the APs,
the MDP in problem P1 can be solved via the following Bellman's equations as in \cite{sutton1998}:
{\small
\begin{align}
    &V\Paren{\Stat(t)} =g\Paren{\Stat(t)}
        + \gamma\mathbb{E}_{\Delay}\bigg\{
            \min_{\Policy(\Stat(t),\Delay(t))}
            \nonumber\\
            &\sum_{\Stat(t+1)} \Pr \Big\{ 
                \Stat(t+1) \Big| \Stat(t), \Policy(\Stat(t), \Delay(t)) \Big\} \cdot V\Big(\Stat(t+1)\Big)
            \bigg\},
    \label{eqn:sp_0}
\end{align}
}%
where the value function $V(\Stat(t))$ of the optimal policy $\Policy^{*}$ 
is defined as follows.
{\small
\begin{align}
    &V\Paren{\Stat(t)} \define
    \lim_{T\to\infty} 
    \mathbb{E}^{\Policy^*}_{\set{\Stat(t)|\forall t}} \Bracket{
        \sum_{t=1}^{T} \gamma^{t-1} g\Big( \Stat(t) \Big) \Big| \Stat(1)
    }.
    \label{eqn:val_f}
\end{align}
}%
Moreover, the optimal policy $\Omega^{*}$ can be obtained by solving the right-hand-side (RHS) of the above Bellman's equations.

However, it is infeasible to solve the above Bellman's equations because each AP only has the knowledge of its own OSI and local \brlatency~in our considered edge computing system.
Thus problem P1 is actually a POMDP, whose general solution is of huge complexity \cite{IJCAI03-NairR,IJCAI99-BoutilierC}.

    \section{Distributed Algorithm with Partial Information}
\label{sec:algorithm}
In this section, we shall introduce a novel approximation method to decouple the centralized optimization on the RHS of the Bellman's equations in equation (\ref{eqn:sp_0}) to each AP for arbitrary system state.
For the APs outside the conflict AP sets of each other, the update of dispatching actions at one AP will not affect the task computation originated from other APs.
On the other hand, for the APs within the same conflict AP set, the optimization of their dispatching actions is coupled.
Hence, we introduce an alternative actions update algorithm to optimize the dispatching actions of subset of $\apSet$ in each broadcast interval periodically, while other APs maintain their dispatching actions in the previous broadcast interval.
Specifically, the proposed distributed algorithm consists of the following two steps:
\begin{enumerate}
    \item We first introduce a time-variant baseline policy, use its value function to approximate the value function of the optimal policy $\Policy^*$ in each broadcast interval, and derive the analytical expression of the approximate value function for arbitrary GSI in Section \ref{subsec:baseline}.
    \item With the approximate value function, an alternative actions update algorithm, where only a subset of APs are selected to update their dispatching actions distributedly in each broadcast interval, is proposed in Section \ref{subsec:ap_alg}.
    Moreover, the analytical performance bound is derived in Section \ref{subsec:analysis}.
\end{enumerate}

\subsection{Baseline Policy and Approximate Value Function}
\label{subsec:baseline}
To alleviate the curse of dimensionality, we first use the baseline policy with fixed dispatching actions to approximate the value function at the RHS of the Bellman's equations in equation (\ref{eqn:val_f}).
The baseline policy is elaborated below.

\begin{policy}[Baseline Policy]
    In the baseline policy $\Baseline$, each AP fixes its dispatching actions as in the previous broadcast interval. Specifically, at the $t$-th broadcast interval,
    {\small
    \begin{align}
        \Baseline\Paren{\Stat(t),\Delay(t)} &\define \Brace{ 
            \Pi_{1}(\Stat_{1}(t),\mathcal{D}_{1}(t)),
            \dots,
            \Pi_{K}(\Stat_{K}(t),\mathcal{D}_{K}(t))
        },
    \end{align}
    }%
    where
    {\small
    \begin{align}
        \Pi_{k}\Paren{\Stat_{k}(t),\mathcal{D}_{k}(t)}
        &\define \Brace{
            {\omega}_{k,j}(t+1) \Big| \forall j\in\jSpace
        }, \forall k\in\apSet.
    \end{align}
    }%
\end{policy}

Let $W_{\Baseline}(\cdot)$ be the value function of the baseline policy $\Baseline$, we shall approximate the value function of the optimal policy $V(\cdot)$ via $W_{\Baseline}$, i.e.,
{\small
\begin{align}
    &V\Paren{\Stat(t+1)} \approx W_{\Baseline}\Paren{\Stat(t+1)}
    \nonumber\\
    =& \sum_{m\in\esSet,j\in\jSpace}\Brace{
        \sum_{k\in\apSet} \tilde{W}^{\AP}_{k,m,j}(\Stat(t+1))
        +\tilde{W}^{\ES}_{m,j}(\Stat(t+1))
    },
    \label{eqn:baseline}
\end{align}
}%
where $\tilde{W}^{\AP}_{k,m,j}(\Stat(t+1))$ denotes the cost raised by the type-$j$ jobs which are being transmitted from the $k$-th AP to the $m$-th processing server with the baseline policy $\Baseline$ and initial system state $\Stat(t+1)$, and $\tilde{W}^{\ES}_{m,j}(\Stat(t+1))$ denotes the cost raised by the type-$j$ jobs on the $m$-th server.
Their definitions are given below.
{\small
\begin{align}
    \tilde{W}^{\AP}_{k,m,j} \Paren{\Stat(t+1)} &\define
        \sum_{i=0}^{\infty} \gamma^{i+1} \mathbb{E}^{\Baseline}\Bracket{
            \Inorm{\vec{R}^{(k)}_{m,j}(t+i+1)}
        },
    \label{w_ap}
    \\    
    \tilde{W}^{\ES}_{m,j} \Paren{\Stat(t+1)} &\define
        \sum_{i=0}^{\infty} \gamma^{i+1} \mathbb{E}^{\Baseline}\Bracket{
            Q_{m,j}(t+i+1) +
            \nonumber\\
            &~~~~~~~~~~\beta I[Q_{m,j}(t+i+1) = L_{max}]
        }.
    \label{w_es}
\end{align}
}%

\subsection{Distributed Actions Update}
\label{subsec:ap_alg}
Although the optimal value function has been approximated via the baseline policy in the previous part, it is still infeasible to solve the RHS of the Bellman's equations as the evaluation of equation (\ref{w_ap}) and (\ref{w_es}) requires the knowledge of GSI and \brlatency~at all APs.
Instead, it is feasible for part of APs to update their dispatching actions distributedly in each broadcast interval and achieve a better performance compared with baseline policy.
Hence, we first define the following sequence of AP subsets, where each subset are selected to update dispatching actions periodically.
\begin{definition}[Subset Partition]
    Let $\mathcal{Y}_{1}, \dots, \mathcal{Y}_{N} \subseteq \apSet$ be a collection of subsets of AP set $\apSet$, which satisfy
    {\small
    \begin{align}
        &\bigcup_{n=0,\dots,N-1} \mathcal{Y}_{n} = \apSet
        \label{eqn:subset_cup}
        \\
        \esSet_{y} \cap \esSet_{y'} &=\emptyset, y' \neq y~(\forall y',y \in \mathcal{Y}_{n}).
        \label{eqn:subset_disjoint}
    \end{align}
    }%
\end{definition}
The subset partition is not trivial and a partition to minimize the update period $N$ is preferred.
A heuristic greedy algorithm is given in Algorithm \ref{alg_0}.
\begin{algorithm}[ht]
    \caption{Greedy Subset Partition Algorithm}\label{alg_0}
    \DontPrintSemicolon 
    \KwIn{$\apSet, \set{\esSet_{k}, \forall k\in\apSet}$ }
    \KwOut{a subset partition $\set{ \mathcal{Y}_{n} }$}
    Initialize a subset partition as $\mathcal{Y}_{n} = \set{n}$ ($\forall n\in\apSet$).\;
    \While{ $\exists \mathcal{Y}_a$ and $\mathcal{Y}_b$ ($a \neq b$) that $\cup\set{ \esSet_{y}|y\in\mathcal{Y}_a } \bigcap \cup\set{ \esSet_{y}|y\in\mathcal{Y}_b } \neq \emptyset$ }
    {
        Count number of subsets in the current subset partition which have disjoint candidate set with $\mathcal{Y}_n$ ($\forall n$), denoted the number as $I_{n}$.\;
        $\tilde{n} \gets \arg\min_{n} I_{n}$\;
        Merge subset $Y_{\tilde{n}}$ with one of its disjoint subsets.\;
    }
\end{algorithm}

At the $t$-th broadcast interval, the APs in the subset indexed with $n \define t \pmod{N}$ update their dispatching actions, while the other APs keep the same dispatching actions as the previous broadcast interval.
Hence, let
{\small
\begin{align}
    \tilde{\mathcal{A}}(t) \define \Brace{ {\omega}_{y,j}(t+1) \Big| \forall y\in\mathcal{Y}_{n},j\in\jSpace }
\end{align}
}%
be the aggregation of dispatching actions of the APs in the subset $\mathcal{Y}_{n}$, and
{\small
\begin{align}
    \hat{\mathcal{A}}(t) \define \Brace{ {\omega}_{y,j}(t+1) \Big| \forall y\notin\mathcal{Y}_{n}, j\in\jSpace}
\end{align}
}%
be the aggregation of dispatching actions of the rest APs, which are same as in the previous broadcast interval.
At the $t$-th broadcast interval, the optimization of dispatching actions $\tilde{\mathcal{A}}(t)$ at the RHS of the Bellman's equations can be rewritten as the following problem.
{\small
\begin{align}
    \textbf{P2:}~
    \min_{ \tilde{\mathcal{A}}(t) }
    &\sum_{\Stat(t+1)} \Pr\Brace{
        \Stat(t+1) \Big| \Stat(t), \hat{\mathcal{A}}(t), \tilde{\mathcal{A}}(t)
    } \cdot W_{\Baseline}\Paren{\Stat(t+1)}.
\end{align}
}%

Moreover, we have the following conclusion on the decomposition of P2.
\begin{lemma}[]
    The optimization problem in P2 can be equivalently decoupled into local optimization problems at APs {for each subset partition}.
    Specifically, {the local optimization for the $y$-th AP in the $n$-th subset ($\forall n$) can be written as}
    {\small
    \begin{align}
        &\textbf{P3:}~
        \min_{ {\mathcal{A}}_{y}(t+1) }
        \mathbb{E}_{\set{ \Stat_{y}(t+1)|\Stat_{y}(t), \hat{\mathcal{A}}(t), {\mathcal{A}}_{y}(t+1) }}
        \nonumber\\
        &~~~~\sum_{ j\in\jSpace,m\in\esSet_{y} } \Brace{
            \tilde{W}^{\AP}_{k,j}\Paren{\Stat_{y}(t+1)}
            +\tilde{W}^{\ES}_{m,j}\Paren{\Stat_{y}(t+1)}
        }.
        \label{eqn:partial}
    \end{align}
    }%
    \label{lemma:w_partial}
\end{lemma}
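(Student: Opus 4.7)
The plan is to exploit two structural properties simultaneously: (i) the approximate value function $W_{\Baseline}$ decomposes additively over server-type pairs $(m,j)$ into in-transit costs $\tilde{W}^{\AP}_{k,m,j}$ and queueing costs $\tilde{W}^{\ES}_{m,j}$, as in (\ref{eqn:baseline}); and (ii) by (\ref{eqn:subset_disjoint}) the candidate server sets $\{\esSet_y\}_{y\in\mathcal{Y}_n}$ are pairwise disjoint, so an updating AP $y\in\mathcal{Y}_n$ can influence only the server-side state of servers inside $\esSet_y$. Together these should allow the $(m,j)$-indexed sum inside P2 to be partitioned by the unique $y\in\mathcal{Y}_n$ whose candidate set contains $m$, with residual terms collapsing to constants.

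Concretely, I would proceed in three steps. First, substitute (\ref{eqn:baseline}) into the objective of P2 and interchange the outer sum over $(m,j)$ with the conditional expectation. Second, partition the outer index: each $m\in\bigcup_{y\in\mathcal{Y}_n}\esSet_y$ lies in exactly one $\esSet_y$ by (\ref{eqn:subset_disjoint}), while any $m$ outside this union cannot receive jobs from APs in $\mathcal{Y}_n$, so its contribution depends only on $\Stat(t)$ and $\hat{\mathcal{A}}(t)$ and therefore drops out of the minimization. Third, for each $y\in\mathcal{Y}_n$, argue that the joint distribution of the $\Stat_y(t+1)$-block---the in-transit indicators $\vec{R}^{(k)}_{m,j}(t+1,0)$ for $k\in\apSet_m, m\in\esSet_y$, together with the queue lengths $Q_{m,j}(t+1,0)$ for $m\in\esSet_y$---depends on $\tilde{\mathcal{A}}(t)$ solely through $\mathcal{A}_y(t+1)$, because every other $y'\in\mathcal{Y}_n$ targets servers disjoint from $\esSet_y$. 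Marginalizing out the remaining state components then converts each $(m,j)$ expectation into one over $\Stat_y(t+1)$, yielding precisely the form of P3.

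The main obstacle is twofold: showing that the Markov transition kernel of $\Stat(t+1)$ factorizes across the $y$-indexed blocks given $\Stat(t),\hat{\mathcal{A}}(t),\tilde{\mathcal{A}}(t)$, and showing that each block's distribution depends on $\Stat(t)$ only through the OSI $\Stat_y(t)$. The first reduces to the mutual independence of the primitive randomness---Bernoulli arrivals, the uploading distributions $\mathbb{U}_{k,m,j}$, and geometric service times $\mathbb{G}(1/c_{m,j})$---which the system model already asserts across APs, links, and servers. The second follows from the definition $\ccSet_y=\bigcup_{m\in\esSet_y}\apSet_m$: every AP capable of dispatching into $\esSet_y$ already has its LSI inside $\Stat_y(t)$, so no state outside $\Stat_y(t)$ is needed to propagate the block forward. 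The subtle bookkeeping is that a non-$\mathcal{Y}_n$ AP may belong to several $\ccSet_y$ simultaneously, but under the baseline its dispatch targets are frozen by $\hat{\mathcal{A}}(t)$, so it contributes independent arrival streams to each server without coupling the blocks. Once these independences are assembled, the objective of P2 separates into a sum of $|\mathcal{Y}_n|$ uncoupled minimizations, each coinciding with P3.
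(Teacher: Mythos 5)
Your proposal is correct and follows essentially the same route as the paper: it exploits the additive $(m,j)$-decomposition of $W_{\Baseline}$ together with the disjointness condition (\ref{eqn:subset_disjoint}) so that each updating AP's action influences only the cost terms attached to servers in its own candidate set $\esSet_y$, all of whose relevant state is contained in the OSI $\Stat_y$ by the definition of the conflict AP set. The paper's own proof is a two-sentence locality assertion; your version merely supplies the kernel-factorization and independence details that it leaves implicit.
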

\begin{proof}
    At the $t$-th broadcast interval, the $y$-th AP in the subset $\mathcal{Y}_{n}$ updates its dispatching actions, which could only affect the future cost raised on itself and its corresponding \emph{candidate server set}, i.e., the part of its OSI.
    Hence, it's obvious that the equation (\ref{w_ap}) and equation (\ref{w_es}) on the RHS of the Bellman's equations could be reduced into the form based only on the OSI of the $y$-th AP ($\forall y\in\mathcal{Y}_{n}$).
\end{proof}

The optimization of {dispatching actions $\mathcal{A}_{y}(t+1)$} for the $y$-th AP ($\forall y\in\mathcal{Y}_{n}$) in P3 could be achieved via searching all the processing servers in $\esSet_{y}$, {whose computational complexity is $O(J|\mathcal{M}_{y}|)$}.
As a result, the overall algorithm of job dispatching is elaborated in Algorithm \ref{alg_1}.
\begin{algorithm}[ht]
    \caption{{Online Alternative Actions Update Algorithm}}\label{alg_1}
    \DontPrintSemicolon 
    Initialize all the APs with heuristic dispatching actions $\set{{\omega}_{k,j}(0)|\forall k\in\apSet,j\in\jSpace}$.\;
    \For{$t=0,1,2,\dots$}{
        $n \gets t \pmod{N}$\;
        \ForPar{$y \in \mathcal{Y}_{n}$}{
            The $y$-th AP observes $\Stat_{y}(t)$ after $\mathcal{D}_{y}(t)$.\;
            Solve P3 with $\Stat_{y}(t), \mathcal{D}_{y}(t)$ and obtain optimized actions $\set{\tilde{\omega}_{y,j}(t+1)|\forall j\in\jSpace}$\;
        }
        $\tilde{\mathcal{A}}(t+1) \gets \set{\tilde{\omega}_{y,j}(t+1)|\forall y\in\mathcal{Y}_{n},j\in\jSpace}.$\;
        $\hat{\mathcal{A}}(t+1) \gets \hphantom{~~} \set{ {\omega}_{y,j}(t) | \forall y\in\mathcal{Y}_{n-1},j\in\jSpace }$\;
        $\hphantom{~~~~~~~~~~~~~~~~~} \cup \set{ {\omega}_{y,j}(t) | \forall y\notin\mathcal{Y}_{n-1},j\in\jSpace }$\;
    }
\end{algorithm}

As a remark notice that since in Algorithm \ref{alg_1}, the computation complexity at each AP scales linearly with respect to {the size of candidate edge server set}, it can be deployed in a scenario with massive APs and edge servers, as long as the {the available number of edge servers for each AP} is limited.

\subsection{Analytical Performance Bound}
\label{subsec:analysis}
In most of the existing approximate MDP solutions \cite{mdp-bound1,mdp-bound2,mdp-bound3,mdp-bound4,mdp-bound5,mdp-bound6}, the performance is difficult to bound analytically as the approximate value function has no accurate meaning on the system cost or utility.
In the proposed algorithm, however, we derive the analytical expression for the baseline policy as the approximate.
Hence, the alternative dispatching actions update can ensure to achieve a better performance than the baseline policy.
This conclusion is summarized below.
\begin{lemma}[Analytical Cost Upper Bound]
    \label{lemma:bound}
    Let $W_{\tilde{\Policy}}(\cdot)$ be the value function (average cost) of the proposed policy $\tilde{\Omega}$, i.e.,
    {\small
    \begin{align}
        W_{\tilde{\Policy}}(\Stat) \define
        \sum_{t=1}^{\infty} \gamma^{t-1} \mathbb{E}^{ \tilde{\Policy} }_{\set{\Stat(t)|\forall t}} \Bracket{
            g\Paren{\Stat(t)} \Big| \Stat(1)=\Stat
        },  
    \end{align}
    }%
    we have
    {\small
    \begin{align}
        V(\Stat)
        \leq W_{\tilde{\Policy}}(\Stat)
        \leq W_{\Baseline}(\Stat),
        \forall \Stat.
    \end{align}
    }%
\end{lemma}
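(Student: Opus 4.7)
The plan is to prove the two inequalities separately. The first inequality $V(\Stat) \le W_{\tilde{\Policy}}(\Stat)$ is immediate: by equation (\ref{eqn:val_f}), $V$ is the infimum of the discounted cost functional over all feasible policies, and $\tilde{\Policy}$ is a particular such policy, so $V(\Stat) \le W_{\tilde{\Policy}}(\Stat)$ holds pointwise. No further work is needed here.

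The second inequality $W_{\tilde{\Policy}}(\Stat) \le W_{\Baseline}(\Stat)$ is the substantive part, and I would prove it by a one-step policy-improvement argument. First, since $\Baseline$ is stationary (every AP simply repeats its previous action), $W_{\Baseline}$ satisfies its own Bellman consistency equation
\begin{equation*}
W_{\Baseline}(\Stat) \;=\; g(\Stat) \;+\; \gamma\,\mathbb{E}^{\Baseline}\!\left[W_{\Baseline}(\Stat')\,\big|\,\Stat\right].
\end{equation*}
Next, I would observe the key feasibility fact: at every broadcast interval $t$, the action profile prescribed by $\Baseline$ (namely ``everyone repeats'') is always contained in the feasible set of the minimization problem \textbf{P2} that defines $\tilde{\Policy}$, because the APs in $\mathcal{Y}_n$ are free to choose $\tilde{\omega}_{y,j}(t{+}1) = \omega_{y,j}(t)$. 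Consequently
\begin{equation*}
\mathbb{E}^{\tilde{\Policy}}\!\left[W_{\Baseline}(\Stat')\,\big|\,\Stat\right] \;\le\; \mathbb{E}^{\Baseline}\!\left[W_{\Baseline}(\Stat')\,\big|\,\Stat\right],
\end{equation*}
so substituting gives $g(\Stat) + \gamma\,\mathbb{E}^{\tilde{\Policy}}[W_{\Baseline}(\Stat')\mid \Stat] \le W_{\Baseline}(\Stat)$ for every $\Stat$.

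I would then iterate this one-step inequality. Applying it to $\Stat'$, taking expectation under $\tilde{\Policy}$, and multiplying by $\gamma$ yields
\begin{equation*}
g(\Stat) + \gamma\mathbb{E}^{\tilde{\Policy}}[g(\Stat')\mid\Stat] + \gamma^2\mathbb{E}^{\tilde{\Policy}}[W_{\Baseline}(\Stat'')\mid\Stat] \;\le\; W_{\Baseline}(\Stat),
\end{equation*}
and by a straightforward induction on $T$,
\begin{equation*}
\sum_{t=1}^{T} \gamma^{t-1}\,\mathbb{E}^{\tilde{\Policy}}\!\left[g(\Stat(t))\,\big|\,\Stat(1){=}\Stat\right] + \gamma^{T}\,\mathbb{E}^{\tilde{\Policy}}\!\left[W_{\Baseline}(\Stat(T{+}1))\,\big|\,\Stat\right] \;\le\; W_{\Baseline}(\Stat).
\end{equation*}
Letting $T\to\infty$, the first term converges to $W_{\tilde{\Policy}}(\Stat)$, while the residual $\gamma^{T}\mathbb{E}^{\tilde{\Policy}}[W_{\Baseline}(\Stat(T{+}1))]$ vanishes because the per-interval cost $g$ is bounded (queues are capped at $L_{\max}$ and the uploading indicators are bounded by $K|\esSet||\jSpace|\Xi$), so $W_{\Baseline}$ is uniformly bounded, and $\gamma<1$ kills the prefactor. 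This delivers $W_{\tilde{\Policy}}(\Stat) \le W_{\Baseline}(\Stat)$.

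The only subtle point I expect is the time-varying nature of $\tilde{\Policy}$: at interval $t$ only the subset $\mathcal{Y}_{t\bmod N}$ reoptimizes, so the one-step minimization at \textbf{P2} depends on $t\bmod N$. This does not break the argument because the feasibility observation holds for every $n$ uniformly (the ``repeat previous action'' choice is always available to every AP, whether or not it belongs to $\mathcal{Y}_n$), so the pointwise inequality $\mathbb{E}^{\tilde{\Policy}}[W_{\Baseline}(\Stat')\mid\Stat] \le \mathbb{E}^{\Baseline}[W_{\Baseline}(\Stat')\mid\Stat]$ is valid at every interval, which is all the iteration needs. Strictly speaking one should also note that $\tilde{\Policy}$ uses only the OSI $\Stat_y$, not the full GSI, but Lemma \ref{lemma:w_partial} already established that the \textbf{P2} minimization decouples exactly into local problems on OSI, so the OSI-based action chosen by each $y\in\mathcal{Y}_n$ is the same as it would be under full GSI, and the chain of inequalities is unaffected.
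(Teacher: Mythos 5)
Your proposal is correct and follows the same route as the paper: the first inequality from suboptimality of $\tilde{\Policy}$, and the second from the one-step policy improvement property (the paper simply cites this property from Bertsekas's dynamic programming text, whereas you spell out the standard argument — feasibility of the ``repeat previous action'' choice in \textbf{P2}, the resulting one-step inequality against $W_{\Baseline}$, and the telescoping/limit step using boundedness of $g$ and $\gamma<1$). Your added remarks on the time-varying subset schedule and the OSI-based decoupling via Lemma \ref{lemma:w_partial} address exactly the points the paper leaves implicit.
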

\begin{proof}
    $V(\Stat) \leq W_{\tilde{\Policy}}(\Stat)$ is straightforward as $\tilde{\Policy}$ is not optimal policy.
    The proof of $W_{\tilde{\Policy}}(\Stat) \leq W_{\Baseline}(\Stat)$ is equivalent to prove the improvement of one-step policy iteration, which is similar to the proof of \emph{Policy Improvement Property} in \cite{dp-control}.
\end{proof}
Therefore, $W_{\Baseline}(\Stat)$ derived in equation (\ref{eqn:baseline}) can be used as the analytical cost upper bound of the proposed policy $\Baseline$.
Moreover, Lemma \ref{lemma:bound} also implies that the proposed policy with fixed service edge server for each AP, as long as the {static dispatching policy} is used as the baseline policy.

    \section{Performance Evaluation}
\label{sec:evaluation}
In this section, we evaluate the performance of the proposed low-complexity dispatching policy $\tilde{\Policy}$ by numerical simulations.
The experiment setup and performance benchmarks are elaborated in Section \ref{subsec:setup}.
The simulation results are illustrated in Section \ref{subsec:basic}.
The sensitivity study on parameters is also applied to provide some insights on the robustness of the proposed policy in Section \ref{subsec:advance}.

\begin{figure}[t!]                                                      %
    \centering                                                          %
    \includegraphics[width=0.42\textwidth]{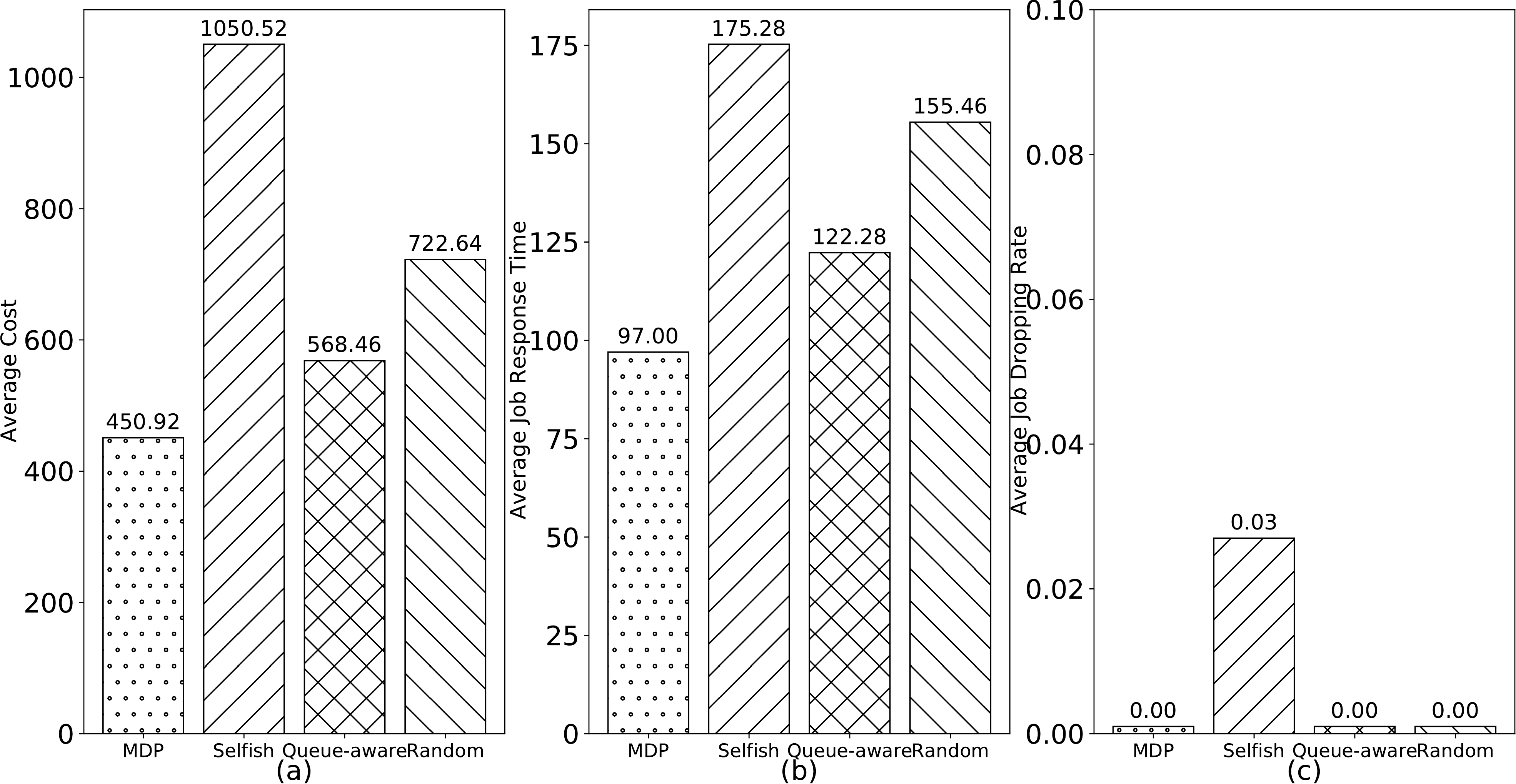}               %
    \caption{Illustration of performance metrics comparison with benchmarks.}
    \label{fig:bar_plot}                                                %
\end{figure}                                                            %

\begin{figure}[t!]                                                                             %
    \centering                                                                                  %
    \includegraphics[width=0.42\textwidth]{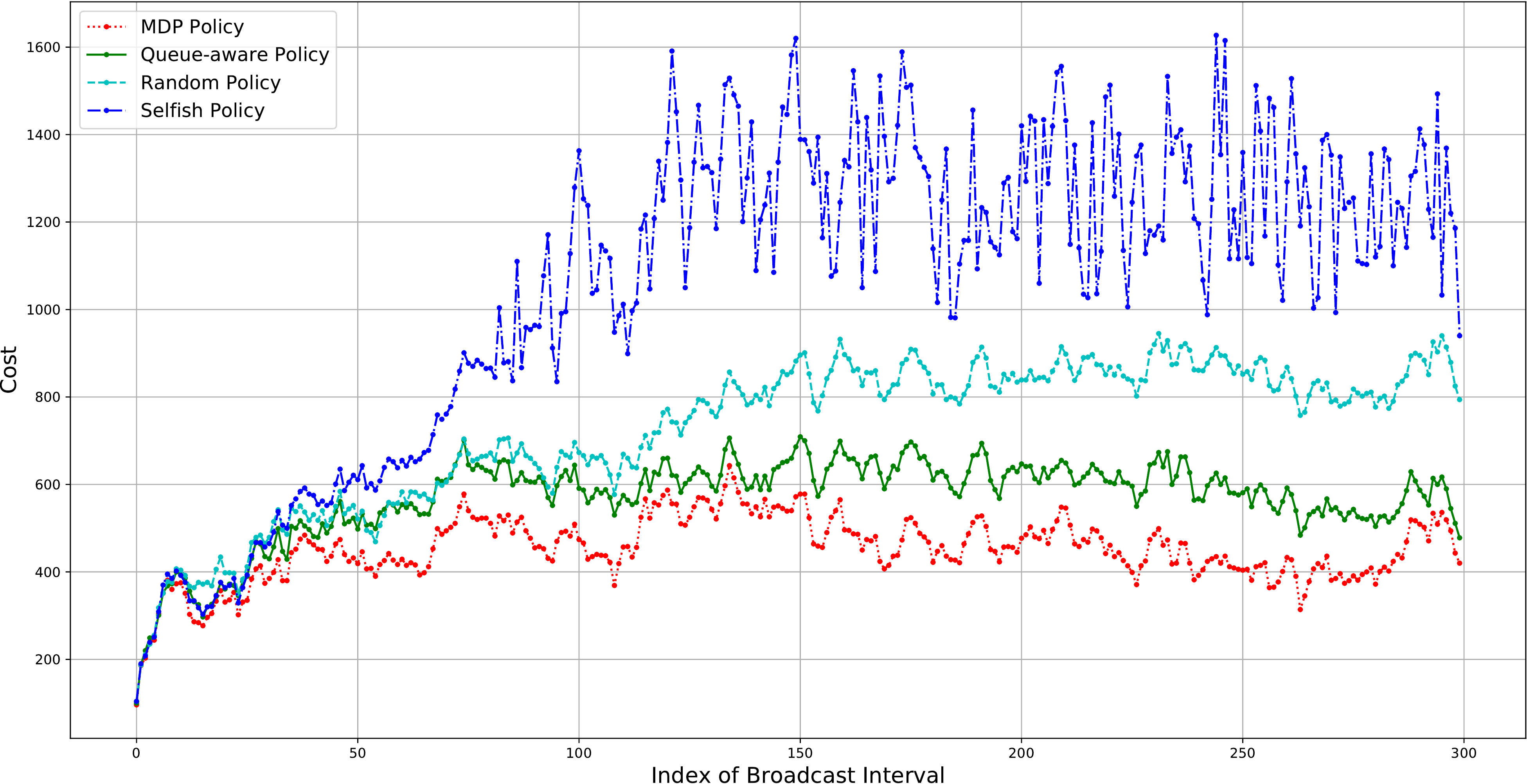}                     %
    \caption{Illustration of cost versus index of broadcast interval.}
    \label{fig:general_timeline}                                                                %
\end{figure}                                                                                    %

\subsection{Experiment Setup}
\label{subsec:setup}

\begin{figure*}[t!]                                                                %
    \centering                                                                      %
    \begin{minipage}[b]{0.30\textwidth}                                             %
        \includegraphics[width=\textwidth]{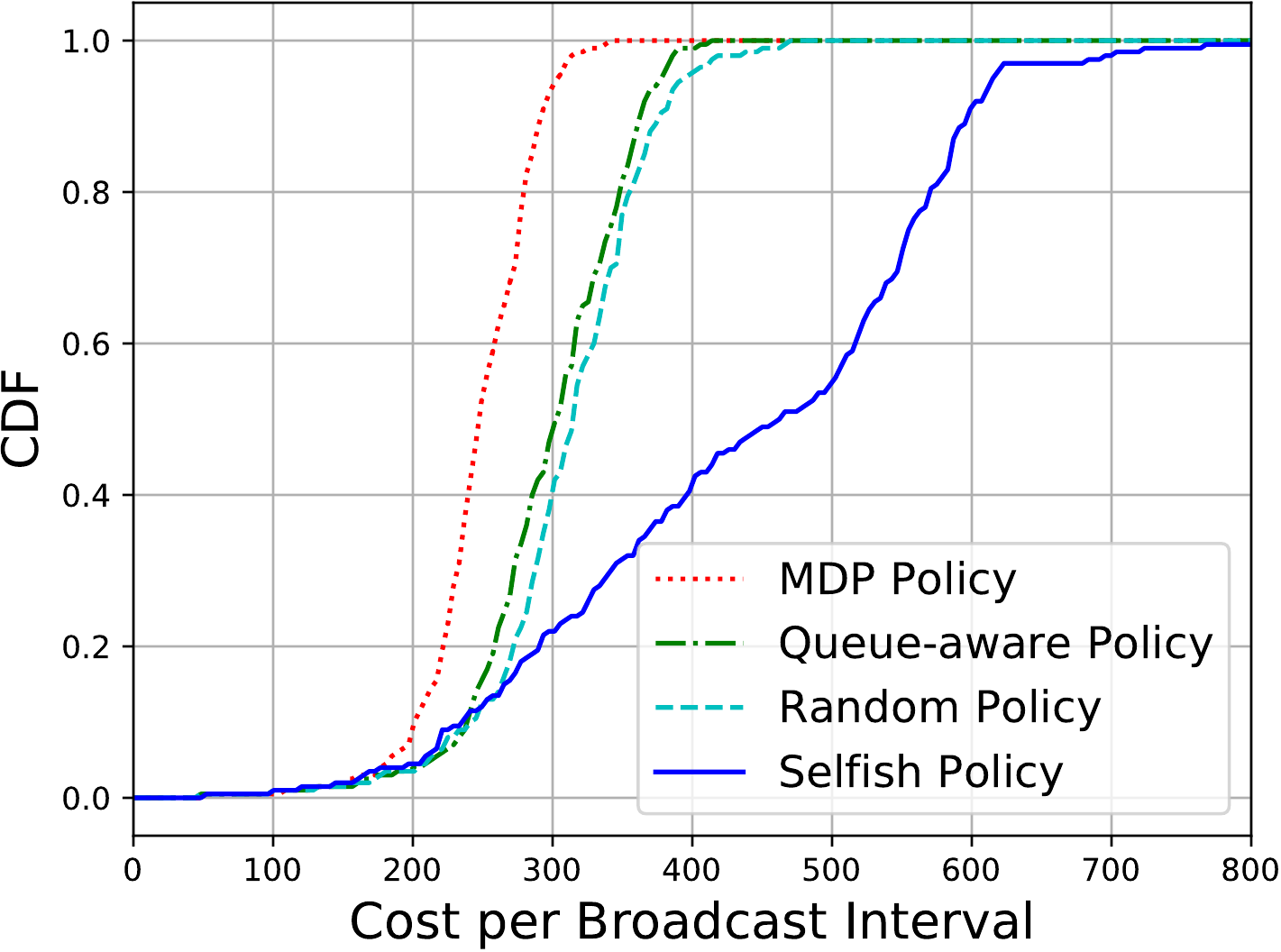} \\                  %
        {(a) \brlatency~as $5$ time slots.}                                               %
    \end{minipage}                                                                  %
    \begin{minipage}[b]{0.30\textwidth}                                             %
        \includegraphics[width=\textwidth]{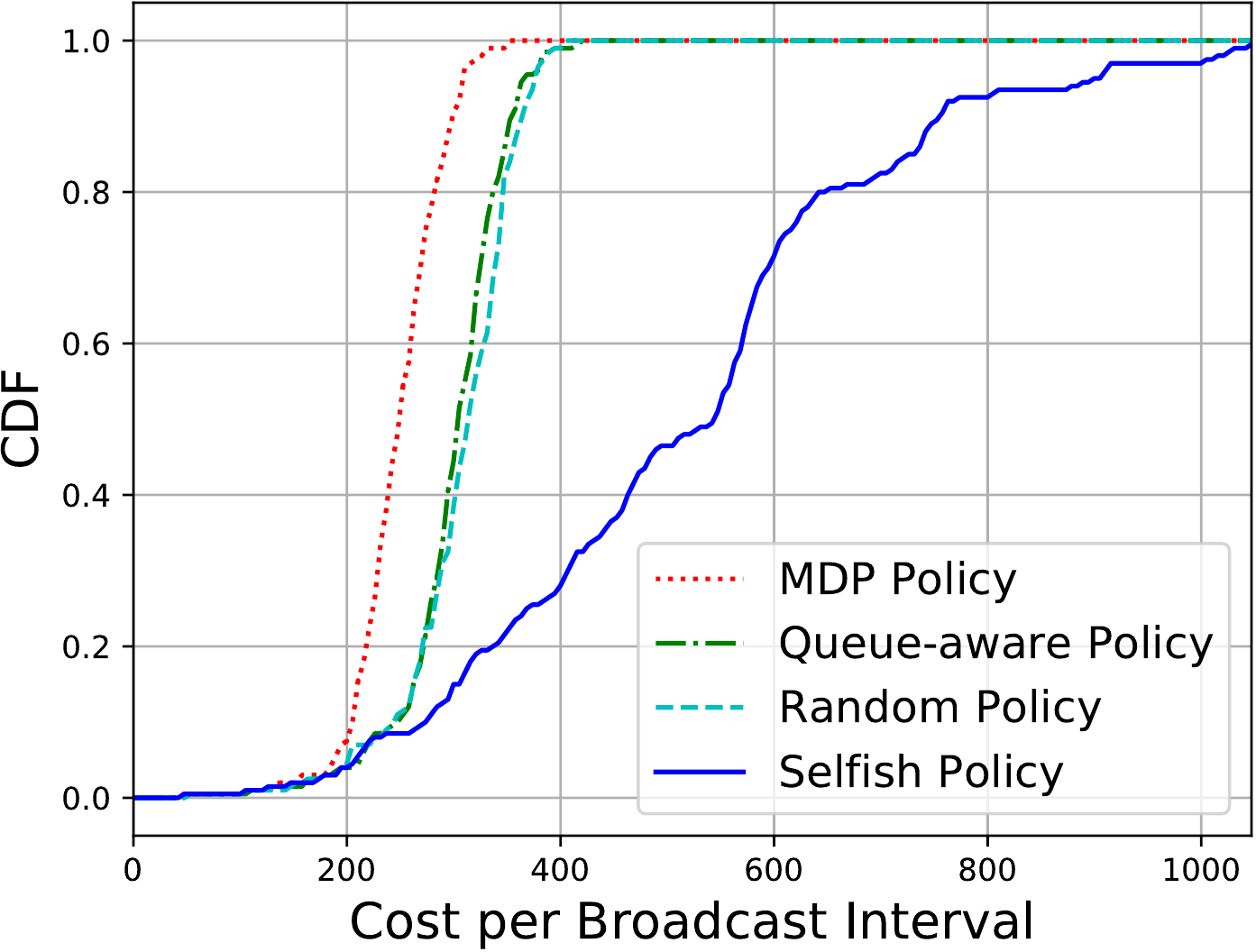} \\                 %
        {(b) \brlatency~as $12$ time slots.}           %
    \end{minipage}                                                                  %
    \begin{minipage}[b]{0.30\textwidth}                                             %
        \includegraphics[width=\textwidth]{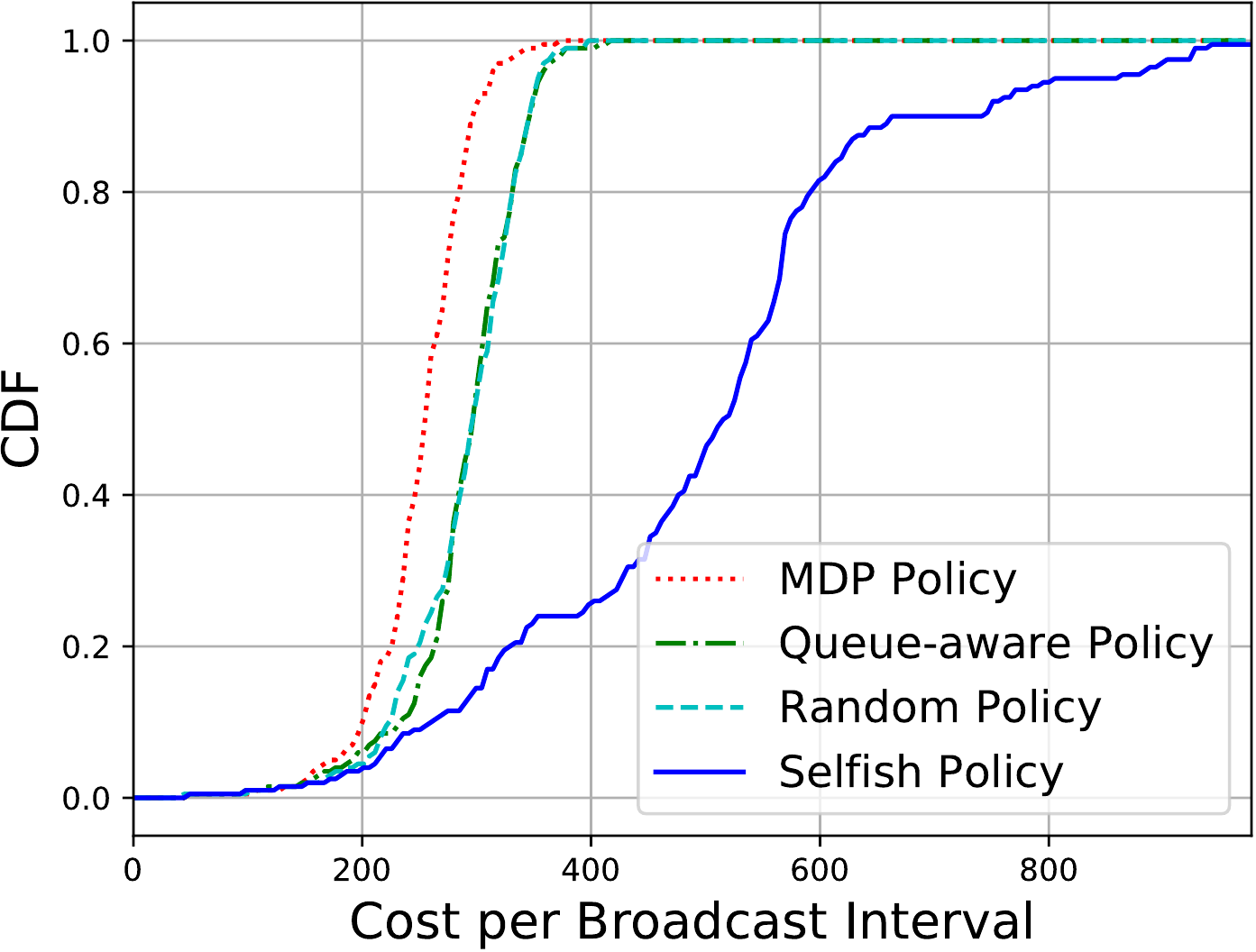} \\                  %
        {(c) \brlatency~as $25$ time slots.}             %
    \end{minipage}                                                                  %
    \caption{Algorithm Robustness versus Signaling Latency.}                        %
    \label{fig:ss_signal}                                                           %
\end{figure*}                                                                       %

In the simulation, we assume that there are $K=15$ APs, $M=10$ processing servers and $J=10$ types of jobs in the system.
One broadcast interval is consist of $t_{B}=25$ time slots.
The network topology among APs is generated according to Barab\'asi-Albert (BA) model \cite{albert1999diameter} and the processing servers are randomly placed collocated with the APs.
The arrival traces and job processing time for each job type are extracted from Google cluster traces \cite{clusterdata:Reiss2011} and then randomly assigned on APs and edge servers, respectively.
The maximum uploading latency is $\Xi = 3t_B$, and the distribution of $\mathbb{U}_{k,m,j}(\Xi)$ ($\forall k\in\apSet, m\in\esSet_{k}, j\in\jSpace$) is arbitrarily generated within the support $\set{0, 1, \dots, \Xi}$.
The \brlatency~ is with an integer support from $0.7t_B$ to $0.9t_B$ time slots.
Each queue for VMs on edge server is with maximum queue length $L_{max}=50$, i.e., there would be at most $50$ jobs on one edge server.
The discount factor $\gamma$ is $0.95$ and the overflow penalty $\beta$ is $120$.


We also propose {three heuristic benchmarks to profile the performance of the proposed MDP policy}. 
\begin{itemize}
    \item \textbf{Random Dispatching Policy}:
            Randomly choose a dispatching edge server in each time slot; 
    \item \textbf{Selfish Policy}:
            Always choose the edge server with the minimum sum of the expected uploading time and processing time;
    \item \textbf{Queue-aware Policy}:
            Always choose the edge server with the minimum sum of expected uploading time, processing time and queueing time based on the observation of outdated queue states.
\end{itemize}
Moreover, we choose the Selfish Policy as the initial dispatching actions for our proposed algorithm (Algorithm \ref{alg_1}).

\subsection{Performance Analysis}
\label{subsec:basic}
As illustrated in Fig.\ref{fig:bar_plot}(a), the proposed algorithm (MDP Policy) outperforms all the benchmarks in the average system cost.
More insights on the performance comparison are provided in Fig.\ref{fig:bar_plot}(b) and (c).
In the former figure, the average job response times, measuring the average number of broadcast intervals from job's arrival at one AP to the completeness of computation at one edge server, are compared.
It can be observed that the proposed policy still outperforms all the benchmarks.
In Fig.\ref{fig:bar_plot}(c), the job dropping rates, measuring the ratio of jobs dropped by edge servers due to queue overflow, are also compared.
It is shown that the proposed policy outperforms other three benchmarks with the minimum average cost and job response time.
And there is no dropping jobs incurred compared with the Selfish policy, which is the initial baseline policy for our proposed algorithm.
Finally, an realization of job dispatching is illustrated in Fig.\ref{fig:general_timeline}, where the number of jobs in the system is plot versus the index of broadcast interval.
It can be observed that the proposed policy manage to keep the number of jobs in lower level, compared with the other benchmarks.
This demonstrates its high dispatching efficiency.

\begin{figure}[ht!]
    \centering
    \begin{minipage}[b]{0.23\textwidth}
        \includegraphics[width=\textwidth]{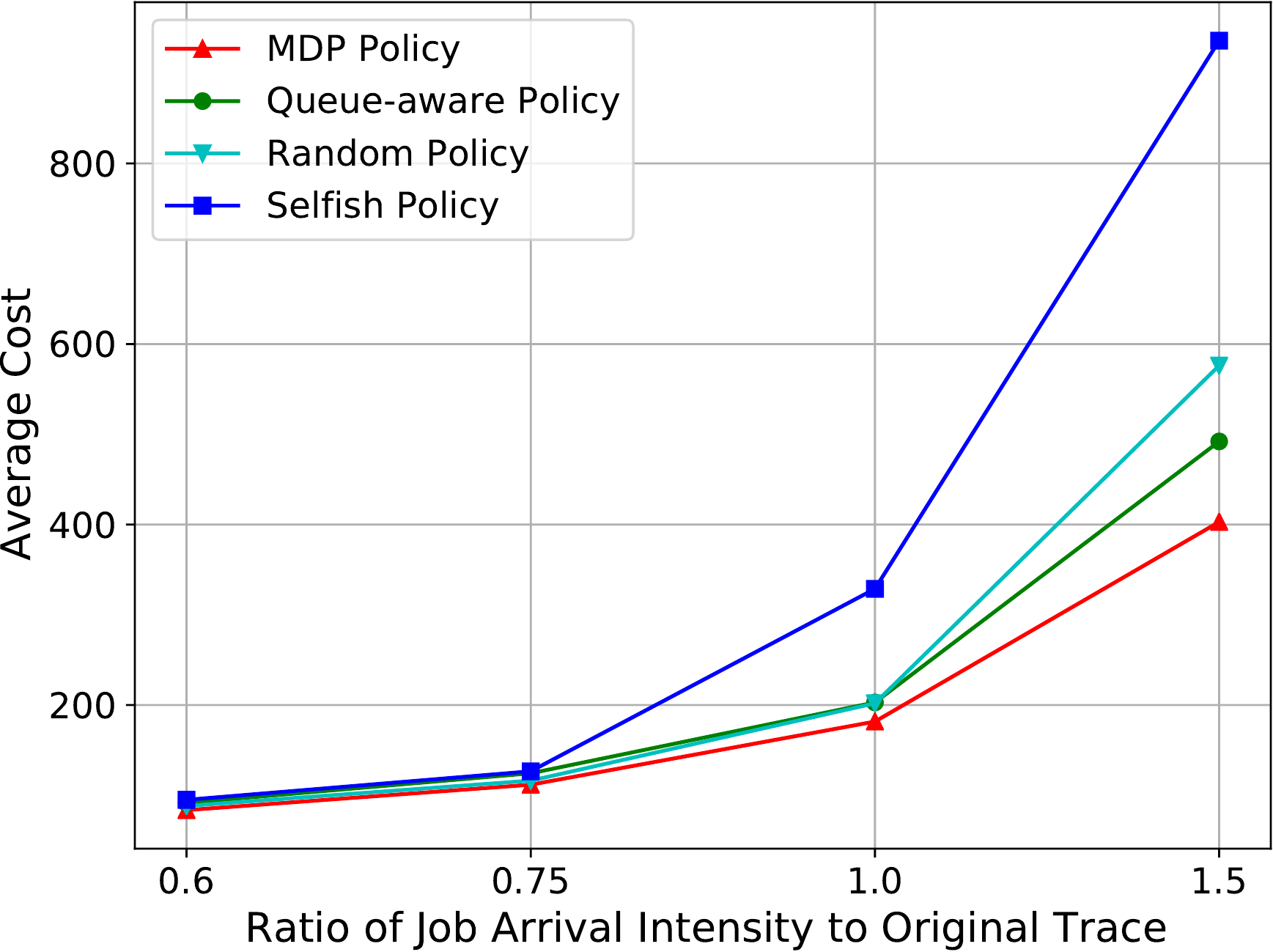}
        \caption{Illustration of average system cost versus job arrival intensity.}
        \label{fig:ss_scale}
    \end{minipage}
    \begin{minipage}[b]{0.23\textwidth}
        \includegraphics[width=\textwidth]{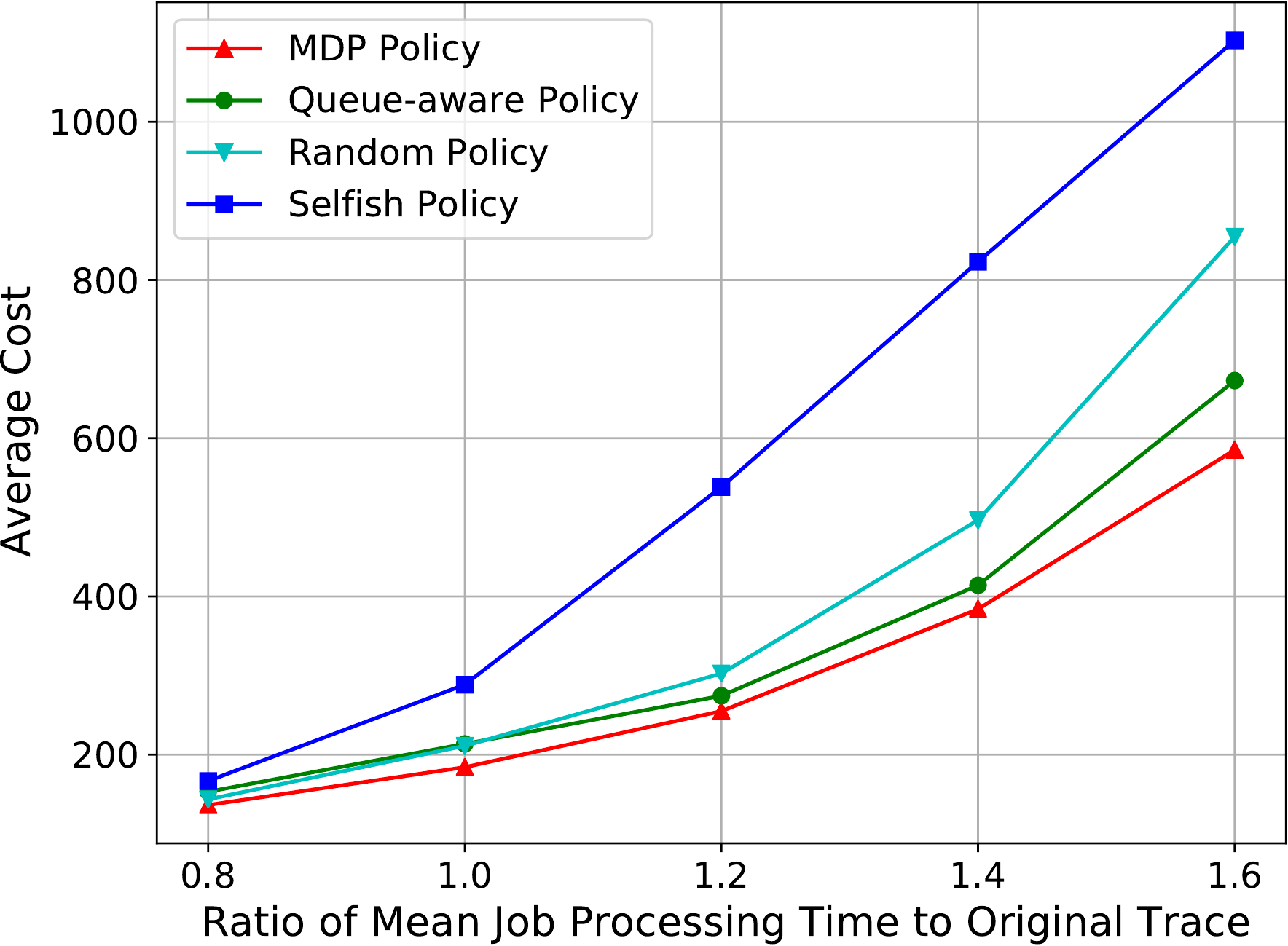}
        \caption{Illustration of average system cost versus mean processing time.}
        \label{fig:ss_dist}
    \end{minipage}
\end{figure}

\subsection{Sensitivity Study}
\label{subsec:advance}

\noindent\textbf{Signaling Latency.}
The simulation results with different \brlatency~$\mathcal{D}_{k}$ ($\forall k\in\apSet$) are illustrated in Fig.\ref{fig:ss_signal}, where the cumulative distribution function (CDF) of the job number in the system is plotted.
Specifically, the \brlatency~of all the APs is set to $5, 12, 25$ in Fig.\ref{fig:ss_signal}(a), Fig.\ref{fig:ss_signal}(b), Fig.\ref{fig:ss_signal}(c), respectively.
It can be observed that with the increasing of \brlatency, the performance of Queue-aware Policy becomes worse. 
The Queue-aware policy slightly outperforms the Random Policy in Fig.5(a) with smaller \brlatency~(achieving a smaller number of jobs in the system), and becomes worse in Fig.5(c) with large \brlatency.
This demonstrates that the Queue-aware Policy is sensitive to \brlatency.
In all the figures, the proposed policy outperforms all the benchmarks, which demonstrates its robustness versus signaling latency.

\noindent\textbf{Job Arrival Intensity.}
We carry out the sensitivity study of job arrival intensity by integer scaling the interval of jobs arriving in Google cluster traces.
The average system cost versus the number of APs is illustrated in Fig.\ref{fig:ss_scale}.
With the increasing of job arrival intensity, the average system cost increases in all the benchmarks and our proposed policy.
It can be observed that our policy performs the best.
Moreover, the performance gain becomes significant when the computation load is heavy.

\noindent\textbf{Mean Processing Time.}
The simulation results of various mean processing time are illustrated in Fig.\ref{fig:ss_dist}, where the mean processing time is taken as $c_{m,j}$ of the processing time distribution $\mathbb{G}(1/c_{m,j})$ in our computation model assumption.
Generally speaking, with the increasing average processing time, the average system cost increases in all the benchmarks and our proposed policy.
The simulation results are consistent with that in Fig.\ref{fig:ss_scale}.
\delete{temp}{
    It can be observed that the proposed policy has better performance than the benchmarks.
    Moreover, the performance gain becomes significant when the computation time is long.
}


    \section{Conclusion}
\label{sec:conclusion}
In this paper, we consider an online distributed job dispatcher design problem for an edge computing system residing in a Metropolitan Area Network.
In this edge computing system, the job dispatchers are implemented in a distributed manner on multiple access points (APs) which collect jobs from mobile users and then dispatch jobs to one edge server or cloud server for processing.
To facilitate the cooperation among distributed job dispatchers, a signaling mechanism is introduced where the APs and edge servers would periodically broadcast their local state information to the job dispatchers.
However, the reception of updated and fully-observed global system state is discouraged as the transmission latency is non-negligible in MAN and reception of all broadcast is time consuming.
Hence, we formulate the distributed optimization problem of job dispatching strategies as a POMDP problem, with outdated and partially-observable information.
The conventional solution for POMDP is impractical due to huge time complexity.
In this paper, we propose a novel low-complexity solution framework for distributed job dispatching, based on which the optimization of job dispatching policy can be decoupled via an alternative policy iteration algorithm and a theoretical performance lower bound is obtained.
The evaluation results show that our proposed policy can achieve obvious and robust performance gain compared with heuristic baselines.
\comments{
Furthermore, this work assumes available knowledge on the distributions of signaling latency, uploading latency and computation time.
As an extension, the reinforcement learning could be integrated with the proposed solution framework when the above statistics are absent.
}

    \ifCLASSOPTIONcaptionsoff
        \newpage
    \fi

    \bibliographystyle{IEEEtrans}
    \bibliography{final.bib}

\end{document}